\newtheorem{theorem}{Theorem}
\newtheorem{lemma}[theorem]{Lemma}
\newtheorem{definition}[theorem]{Definition}
\newtheorem{proposition}[theorem]{Proposition}
\def\E{\mathop{\rm E}}
\newcommand{\defeq}{\stackrel{\Delta}{=}}
\newcommand{\cN}{{\mathcal N}}
\newcommand{\dist}{\mathsf{dist}}
\newcommand{\cT}{\mathcal{T}}
\newcommand{\cA}{\mathcal{A}}
\newcommand{\depth}{\mathsf{depth}}
\newcommand{\parity}{\mathsf{Parity}}
\newcommand{\majority}{\mathsf{Parity}}
\newcommand{\val}{\mathsf{val}}
\begin{document}

\title{A Sampling Technique of Proving Lower Bounds for Noisy Computations\footnote{A preliminary version of this work appeared 
in the Proceedings of the
 49th Annual IEEE Symposium on Foundations of Computer Science, 2008, pp. 394-402.}}

\author{Chinmoy Dutta\thanks{Twitter Inc., San Francisco, USA. 
email: {\tt chinmoy@twitter.com}. The work was done while this author was at
Tata Institute of Fundamental Research, Mumbai, INDIA.} \\
\and
Jaikumar Radhakrishnan\thanks{Tata Institute of Fundamental Research, Mumbai, INDIA. 
email: {\tt jaikumar@tifr.res.in}}}

\date{}
\maketitle

\begin{abstract}
We present a technique of proving lower bounds for noisy computations. 
This is achieved by a theorem connecting computations on a kind of 
randomized decision trees and sampling based algorithms. 
This approach is surprisingly powerful, and applicable to several 
models of computation previously studied. 

As a first illustration we show how all the results of Evans and Pippenger 
(SIAM J. Computing, 1999) for noisy decision trees, some of which were derived 
using Fourier analysis, follow immediately if we consider the sampling-based
algorithms that naturally arise from these decision trees.

Next, we show a tight lower bound of $\Omega(N \log\log N)$ on the number of
transmissions required to compute several functions (including the
parity function and the majority function) in a network of $N$
randomly placed sensors, communicating using local transmissions, and
operating with power near the connectivity threshold. This result
considerably simplifies and strengthens an earlier result of Dutta, Kanoria
Manjunath and Radhakrishnan (SODA 08) that such networks cannot compute 
the parity function reliably with significantly fewer than $N\log \log N$ transmissions.
The lower bound for parity shown earlier made use of special properties of the parity function and is
inapplicable, e.g., to the majority function. In this paper, we use our approach to
develop an interesting connection between computation of boolean
functions on noisy networks that make few transmissionss, and algorithms
that work by sampling only a part of the input. It is straightforward
to verify that such sampling-based algorithms cannot compute the
majority function. 
\end{abstract}

\section{Introduction}
\label{sec:introduction}

We present a novel technique for analyzing randomized decision trees. This
method does not depend upon the specfic function being computed by the decision tree
and can be applied for proving lower bounds in various models for a variety of functions. 

We introduce the technique in the simplistic setting of $\epsilon$-noisy decision trees. 
$\epsilon$-noisy decision trees can be viewed as a simple kind of randomized decision trees. 
As an application of our technique, we show how it provides elementary and unified proofs 
of all the lower bounds of Evans and Pippenger~\cite{Evans99} for 
average noisy decision tree complexity of several types of functions. 
Their work introduced the notion of {\em noisy leaf complexity}, which was analyzed using Fourier methods.
However, as we show, our technique yields elementary arguments that places these
results in a compact and unified framework.

We then use our technique to derive the main result of this paper - 
a lower bound for wireless sensor networks.
This simplifies and extends a lower bound of\cite{Dutta08, Dutta08-journal} 
for this model. A wireless sensor network consists of sensors that collect and
cooperatively process data in order to compute some global
function. The sensors interact with each other by transmitting
wireless messages based on some protocol. The protocol is required to
tolerate errors in transmissions since wireless messages typically are
noisy.

In the problem we study, there are $n$ sensors, each with a boolean
input, they are are required to cooperatively compute some function of
their inputs. The difficulty of this task, of course, depends on the
noise and the connectivity of the network. In this paper, we assume
that each bit sent is flipped (independently for each receiver) with
probability $\epsilon>0$ during transmission. As for connectivity, we
adopt the widely used model of random planar networks. Here the
sensors are assumed to be randomly placed in a unit square. Then each
transmission is assumed to be received (with noise) by the sensors
that are within some prescribed radius of the sender. The radius is
determined by the amount of power used by the sensors, and naturally
one wishes to keep the power used as low as possible, perhaps just
enough to ensure that the entire network is connected. It has been
shown by Gupta and Kumar~\cite{Gupta00} that the threshold of
connectivity is $\theta\left(\sqrt{\frac{\ln n}{n}}\right)$ (with a
radius much smaller than this the network will not be connected almost
surely, and with radius much larger it will be connected almost
surely).

It was shown by Dutta, Kanoria, Manjunath and Radhakrishnan~\cite{Dutta08, Dutta08-journal},
that computing the parity of the inputs requires $\Omega(n \log \log
n)$ transmissions. This result showed that the protocol presented by
Ying, Srikant and Dullerud~\cite{Ying06} for computing the sum of all
the bits (and hence any symmetric functions of these bits) is
optimal. The lower bound argument in \cite{Dutta08, Dutta08-journal} depended strongly on the
properties of the parity function. In particular, the argument was not
applicable for showing superlinear lower bounds for the majority and
other symmetric functions in this model. Using the technique presented in this paper, we 
can now give optimal lower bounds for several functions, including the majority function.

The starting point of our method is the connection between noisy
wireless network protocols, and certain randomized decision trees.
This connection was derived in \cite{Dutta08, Dutta08-journal}, and it in turn made crucial
use of a result in \cite{Goyal05}. The lower bound for parity in \cite{Dutta08, Dutta08-journal}
was derived by rearranging the randomized decision trees obtained from
wireless protocols computing parity. In this work, we show that computations on decision
trees that arise in our context can be simulated by randomized
algorithms that leave a non-trivial fraction of their inputs unread.
Once this is established, it is relatively straightforward to
conclude that several functions cannot be computed in this model.

In order to state our result formally we need a formal definition of
the model of noisy communication networks, which we now reproduce
from~\cite{Dutta08}.
 
\begin{definition}[Noisy communication network, protocol]
\label{def:noisy-network-protocol} 

A communication network is an undirected graph $G$ whose vertices
correspond to processors and edges correspond to communication links.
A message sent by a processor is received by all its neighbors.
\begin{description}
\item[Noise:] In an $\epsilon$-noise network, the messages are
subjected to noise as follows. Suppose processor $v$ sends bit $b$ in
time step $t$. Each neighbor of $v$ then receives an independent
noisy version of $b$; that is, the neighbor $w$ of $v$ receives the
bit $b \oplus \eta_{w,t}$, where $\eta_{w,t}$ is an $\epsilon$-noisy
bit (that takes the value $1$ with probability $\epsilon$ and $0$
with probability $1-\epsilon$), these noisy bits being mutually
independent.

\item[Input:] An input to the network is an assignment of bits to the processors, and is formally
an element of $\{0, 1\}^{V(G)}$.
 
\item[Protocol:] A protocol on $G$ for computing a function
$f:\{0,1\}^{V(G)} \rightarrow \{0,1\}$ works as follows. The
processors take turns to send single bit messages, which are received
only by the neighbors of the sender.  In the end, a designated
processor $v^* \in V(G)$ declares the answer. The cost of the protocol
is the total number of bits transmitted.  A message sent by a
processor is a function of the bits that it possesses until then.  The
protocol with cost $T$ is thus specified by a sequence of vertices
$\langle v_1,v_2,\ldots,v_T\rangle$ and a sequence of $T$ functions
$\langle g_1,g_2,\ldots,g_T \rangle$, where $g_t:\{0,1\}^{j_t}
\rightarrow \{0,1\}$ and $j_t$ is the number of bits received by $v_t$
before time step $t$ (plus one if $v_t$ is an input
processor). Furthermore, $v_T=v^*$, and the final answer is obtained
by computing $g_T$. Note that in our model the number of transmissions
is the same for all inputs.

\item[Error:] Such a protocol is said to be a $\delta$-error protocol,
if for all inputs $x \in \{0,1\}^{V(G)}$, $\Pr[\mathsf{output} = f(x)] \geq 1-\delta$.
\end{description}
\end{definition}

In this paper, we consider networks that arise out of random placement 
of processors in the unit square.

\begin{definition}[Random planar network]
A random planar network $\cN(N,R)$ is a random variable whose values
are undirected graphs. The distribution of the random variable depends
on two parameters: $N$, the number of vertices, and $R$, the
transmission radius.  The vertex set of $\cN(N,R)$ is $\{P_1, P_2,
\ldots , P_N\}$. The edges are determined as follows. First, these
processors are independently placed at random, uniformly in the unit
square $[0,1]^2$. Then,
\[ E(\cN) = \{ (P_i,P_j):  \dist(P_i,P_j) \leq R\}.\]
\end{definition}

Our main result is the following.
\begin{theorem}[Lower bound for majority]
\label{thm:lb-majority}
Let $R \leq N^{-\frac{1}{3}}$.  Let $\delta, \epsilon \in
(0,\frac{1}{2})$.  Then, with probability $1-o(1)$ (over the placement
of processors) every $\delta$-error protocol on $\cN(N,R)$ with
$\epsilon$-noise for computing the majority on $N$ bits requires
$\Omega(N \log \log N)$ transmissions.
\end{theorem}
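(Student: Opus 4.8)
\medskip
\noindent\textbf{Proof strategy.}
The plan is to pass, in two stages, from a parsimonious noisy protocol to a randomized algorithm that reads only part of its input, and then to observe that no such algorithm can compute majority. For the first stage, condition on a ``regular'' placement of the $N$ processors: partition $[0,1]^2$ into $\Theta(R^{-2})$ cells of side $\Theta(R)$; since the expected occupancy of a cell is $\Theta(NR^2)=\omega(\log N)$ (using $R\le N^{-1/3}$), a Chernoff bound together with a union bound shows that with probability $1-o(1)$ every cell holds $\Theta(NR^2)$ processors, so in particular every processor has only $D=\Theta(NR^2)=O(N^{1/3})$ neighbours, all located in a bounded number of nearby cells. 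Fix such a placement and a $\delta$-error $\epsilon$-noise protocol with $T$ transmissions. Invoking the reduction of \cite{Dutta08,Dutta08-journal} (which builds on \cite{Goyal05}), this protocol yields a randomized decision tree $\cT$ of the restricted, ``local'' type isolated there --- its ability to combine inputs in a single step is limited by the degree bound $D$ --- which computes majority with error $\delta+o(1)$ and whose total query weight is controlled by $T$. Consequently, if $T=o(N\log\log N)$, there is a set $S$ of $(1-o(1))N$ coordinates each of which $\cT$ queries at most $O(\log\log N)$ times along a typical root-to-leaf path.

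\medskip
The crux --- the sampling technique of the title, which we first exercise on $\epsilon$-noisy decision trees to reprove all the bounds of Evans and Pippenger~\cite{Evans99} --- is the following decoupling lemma: a local randomized decision tree in which a set $S$ of almost all coordinates is probed only $O(\log\log N)$ times can be simulated, to within an additive $o(1)$ in its output distribution on \emph{every} input, by a randomized algorithm that on each input leaves a constant fraction (depending on $\epsilon$ and $\delta$) of the coordinates in $S$ completely unread. Intuitively, a coordinate touched only $O(\log\log N)$ times is probed so weakly that --- \emph{because the tree is local and therefore cannot aggregate many such weak probes in one stroke} --- the tree's behaviour can be peeled off it: one walks through the queries of $\cT$ in order and redirects the handful of queries to each chosen coordinate onto a freshly sampled bit, bounding the information lost at each step. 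I expect this to be the main obstacle. A single $\epsilon$-noisy query already fixes a bit with advantage bounded away from $0$, and $O(\log\log N)$ of them pin it down with advantage $1-1/\mathrm{polylog}(N)$, so the crude substitution ``replace all of $S$ by fresh random bits'' loses far too much; the decoupling must exploit the restricted structure of the trees coming from bounded-degree networks --- essentially the same combinatorial effect that underlies the $\log\log$ barrier in \cite{Goyal05,Dutta08,Dutta08-journal} --- to keep the accumulated error at $o(1)$, and it is the matching between this error budget and the $\Theta(\log\log N)$ query budget that makes the final bound tight.

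\medskip
Granting the lemma, the conclusion is routine. The simulating algorithm $A$ computes majority with error $\delta+o(1)<1/2$, yet on every input it fails to read a set $U$ of $\Omega(N)$ coordinates. Feed $A$ the distribution uniform over strings of Hamming weight in $[\,N/2-c\sqrt N,\,N/2+c\sqrt N\,]$ for a suitable constant $c$: conditioned on $U$ and on the bits $A$ does read, the number of $1$'s that $A$ has not seen fluctuates on the scale $\sqrt{|U|}=\Theta(\sqrt N)$, which is comparable to the whole window, so the majority value is genuinely undetermined and $A$ must err with probability at least an absolute constant. Choosing the hidden constants so that this constant exceeds $\delta+o(1)$ gives a contradiction, and hence every $\delta$-error $\epsilon$-noise protocol on $\cN(N,R)$ computing majority makes $\Omega(N\log\log N)$ transmissions, proving Theorem~\ref{thm:lb-majority}. (The same scheme handles parity and the other functions addressed in this paper: all that is used at the last step is that the function changes value when a random $\Omega(N)$-subset of a near-balanced input is resampled.)
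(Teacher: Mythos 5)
Your overall architecture --- protocol $\to$ randomized decision tree $\to$ algorithm that leaves variables unread $\to$ majority is too sensitive to survive this --- is exactly the paper's, and the first and last stages are sound in outline. But the middle step, which you yourself flag as ``the main obstacle,'' is left unproven, and the version you propose is both harder than necessary and not what the paper's machinery delivers. The paper's key lemma (Theorem~\ref{thm:randomized-to-sampling}) is an \emph{exact} simulation statement for an \emph{arbitrary} randomized decision tree: it defines, for each variable $x_i$, an uncertainty $\uncertainty_i(\cT)$ (the minimum overlap between the leaf distributions under $x_i=0$ and $x_i=1$, minimized over adversarial settings of the queries to other variables), and shows by a bottom-up construction of transition probabilities $\tell_v,\alpha_v$ that the tree's leaf distribution is reproduced exactly, on every input, by an algorithm that leaves $x_i$ unread independently with probability exactly $\uncertainty_i(\cT)$. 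No locality of the tree, no bound on how often a coordinate is queried, and no $o(1)$ simulation error enter this step; all the quantitative content (that $\uncertainty_i(\cT)\geq N^{-1/10}$ when the protocol makes at most $\frac{1}{C}N\log\log N$ transmissions) is imported wholesale from the protocol-to-tree reduction of \cite{Dutta08,Dutta08-journal} and \cite{Goyal05} as Proposition~\ref{prop:protocol-to-tree}. Your worry that ``$O(\log\log N)$ noisy queries pin the bit down to advantage $1-1/\mathrm{polylog}(N)$'' is answered not by exploiting the restricted structure of local trees but simply by accepting a tiny unread probability: the residual overlap is only about $N^{-1/10}$ per variable, and that turns out to be enough.

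This mis-calibration propagates to your endgame. You ask for a \emph{constant fraction} of unread coordinates and a hard distribution spread over Hamming weights $N/2\pm c\sqrt N$; the paper obtains only about $N^{4/5}$ unread coordinates out of $N$ (each left unread with probability $N^{-1/5}$, in the notation of Proposition~\ref{prop:majority-by-sampling}) and therefore must use the razor-thin hard distribution supported on weights $k$ and $k+1$, where a fluctuation of order $N^{2/5}$ in the number of unseen ones already swamps the width-one window. Your version of the final step is fine in its own regime, but that regime is not reachable from the simulation, so as written the argument does not close; with the lemma restated in the paper's form and the hard distribution narrowed, it does. (A smaller quibble: your claim that $NR^2=\omega(\log N)$ ``using $R\le N^{-1/3}$'' is backwards --- that hypothesis gives an \emph{upper} bound on cell occupancy and degree, which is what the reduction to trees of bounded uncertainty actually needs.)
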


\paragraph{Remarks:} 
\begin{itemize}
\item It was conjectured in \cite{Dutta08} that one cannot approximate
the sum to within an additive error of $N^\alpha$ (for some $\alpha >
0$) using $O(N)$ transmissions. Using the techniques of this paper,
we can prove this conjecture (details omitted from this paper).

\item Typical protocols in wireless networks operate by computing
using broadcasts in cells, where there is complete connectivity
between the processors, and then aggregating the information across
the cells (e.g. ~\cite{Kanoria07}, ~\cite{Ying06}). Here it
makes sense to consider functions of the form
$f(g(X_1),g(X_2),\ldots,g(X_k))$, where $X_i$ is the part of the input
that falls in cell $i$, $f$ is some symmetric boolean function and $g$
is some function to be computed inside each cell. For example, if $f$
and $g$ are both parity, then this corresponds to the parity function
on the entire input. Our techniques show that for most symmetric
functions, and all $g$ with high sensitivity, one requires $\Omega(N
\log \log N)$ transmissions. (Details omitted.)

%
\end{itemize}

As stated earlier, our technique yields simple proofs of previous lower
bound results on noisy decision trees.

\begin{definition}[Noisy decision trees]
A {\em boolean decision tree $\cT$} for input $x = \langle<x_1, \ldots
,x_n\rangle \in \{0,1\}^n$ is a binary tree in which each internal
node $v$ has a label $l(v) \in [1,n]$ and each leaf $\ell$ has a value
$val(\ell) \in \{0,1\}$. The two outgoing edges of each internal node
are labelled by the value $0$ and $1$.  The computation of $\cT$ on
input $x$ is the unique path starting at the root of the tree and
continuing up to a leaf as follows: at internal node $v$, the outgoing
edge labelled with $x_{l(v)}$ is chosen to get to the next node.  The
result of the computation is $\val(\ell)$ where $\ell$ is the leaf
reached by the computation. In an {\bf $\epsilon$-noisy boolean
decision tree}, at each internal node, the incorrect outgoing edge is
chosen with probability $\epsilon$ independent of former
choices. Equivalently, each internal node $v$ is assigned a binary
random variable $\eta_v$ that takes the value $1$ with probability
$\epsilon$ independently. Then, on reaching internal node $v$, the
outgoing edge labelled $x_{l(v)} \oplus \eta_v$ is used to determine
the next node.  For each input $x$, the computation path and the value
$\cT(x)=\ell(x)$ output by the tree is a random variable.  Let
$\depth_{\epsilon,\delta}(f)$ be the minimum depth of an
$\epsilon$-noisy decision tree $\cT$ such that
$\Pr[\cT(x) \neq f(x)] \leq \delta$, where the probability is
over the input $x$ chosen uniformly from $\{0,1\}^n$ and the 
internal randomness of $\cT$.
\end{definition}

The main results of Evans and Pippenger~\cite{Evans99} were 
(a) $\depth_{\epsilon,\frac{3}{4}}(\parity_n) 
       = \Omega(n (\log n)/\log (1/\epsilon))$,
(b) $\depth_{\epsilon,\frac{3}{4}}(f)  
            = \Omega(n (\log n)/log (1/\epsilon))$ and
      for almost all functions;
(c) $\depth_{\epsilon,\frac{3}{4}}(f) =
      \Omega(n \log s)$
      if $f$ is $n\left(1-\frac{1}{s}\right)$-resilient.

In Section~\ref{sec:evans-pippenger}, we provide a simple proof
of the lower bounds.

\subsection{Related work}

Noisy broadcast models have been studied in the past where all sensors
receive all messages (with independent noise).
Gallager~\cite{Gallager88} showed a remarkable protocol to collect all
input bits at one sensor using $O(N\log \log N)$
transmissions. Clearly, this give the same upper bound for computing
any function of the input bits. Several other works have focussed on
constructing protocols for specific functions in variants of the noisy
broadcast model, e.g., Feige and Raghavan~\cite{Feige00},
Newman~\cite{Newman04}, Kushilevitz and Mansour~\cite{Kushilevitz98},
and Goyal, Kindler and Saks~\cite{Goyal05}.  Using an insightful
combination of information-theoretic and fourier based methods, Goyal,
Kindler and Saks~\cite{Goyal05} showed that Gallager's protocol was
the best possible for collecting all the bits. 

In sensor networks, considerations of power impose stringent limits on
the transmission radius. In this paper, we study networks arising from
random placement of sensors with transmission radius around the
threshold required to ensure connectivity.  As mentioned above, in
this model Ying, Srikant and Dullerud~\cite{Ying06} devised a
protocol for computing the sum using $O(N\log \log N)$ transmissions.
Kanoria and Manjunath~\cite{Kanoria07} showed a protocol with $O(N)$
transmissions to compute the OR function. Making crucial use of a
result of Goyal, Kindler and Saks~\cite{Goyal05}, it was shown in~\cite{Dutta08, Dutta08-journal} 
that computing parity requires $\Omega(N\log \log N)$ transmissions
which was then extended to several other functions in the initial presentation of 
this work\cite{Dutta08b}. 

Unlike in the model of noisy wireless networks, several lower bounds
results have appeared in the literature on noisy decision trees.
Reischuk and Schmeltz~\cite{Reischuk91} showed that almost all boolean functions
of $N$ arguments require $\Omega(N \log N)$ queries. Feige, Peleg,
Raghavan, and Upfal~\cite{Feige94} showed an $\Omega(N \log N)$ lower bound
for the parity function. Evans and Pippenger~\cite{Evans99} presented
arguments to show that these results also hold in the average case.

\subsection{Techniques}


We now present an overview of the proof technique used to derive the
lower bounds in this paper. A very detailed discussion 
of all the techniques can be found in the Phd thesis~\cite{Dutta09}.

The proof of our main result,like the proof in \cite{Dutta08}, first converts computations
on noisy broadcast networks to computations on randomized decision
trees. 

\begin{definition}[Randomized decision tree]
\label{def:randomizedtree}
A randomized decision tree is a model for processing inputs in $\{0,
1\}^n$. For an internal node $v$ of the tree let $v_L$ be its left
child and $v_R$ its right child.  Each internal node $v$ of the tree
is labelled by a pair $\langle{i_v,g_v}\rangle$, where $i_v \in
\{1,2,\ldots,n\}$, and $g_v:\{0,1\} \rightarrow \{v_L,v_R\}$ is a noisy
function, whose output depends on its input and some internal
randomness that is independent for different noisy computations
performed in the tree. Once an input $x=\langle x_1,x_2, \ldots,
x_n\rangle \in \{0, 1\}^n$ is fixed, the (random) output of the
tree is determined by the following natural computation. We start at
the root, and when we arrive at an internal node $v$, we
determine the next vertex by evaluating $g_v(x_{i_v})$. The (random)
output of the tree on input $x \in \{0, 1\}^n$ is the 0-1 label
of leaf reached.
\end{definition}
\newcommand{\overlap}{\mathsf{overlap}}
\newcommand{\uncertainty}{\beta}

Our lower bound for noisy broadcast networks will follow from a lower
bound we show for randomized decision trees that arise from them. A
central notion in our analysis, is the amount of uncertainty about
each variable that remains at the end of the computation. We will use
the notion of overlap between distributions to quantify this
uncertainty.

\begin{definition}[Overlap between distributions] 
\label{def:overlap}
Let $D_0$ and $D_1$ be distributions on some set $L$. The overlap between $D_0$ and $D_1$
is given by $\overlap(D_0,D_1) = \sum_{\ell \in L} \min\{D_0(\ell),
D_1(\ell)\}.$ Note that this quantity is directly related to the
$\ell_1$ distance between $D_0$ and $D_1$: $\|D_0,D_1\|_1 =
2(1-\overlap(D_0,D_1))$.  
\end{definition}

\begin{definition}[Uncertainty] 
\label{def:uncertainty}
Let $\cT$ be a randomized
decision tree for single bit inputs, where each internal node computes
a function based $x\in\{0,1\}$. Let $D_0$ be the distribution on
$\cT$'s leaves when $x=0$, and let $D_1$ be the distribution when
$x=1$. Then, the the uncertainty of $\cT$ about $x$, denoted by
$\uncertainty(\cT)$, is given my $\overlap(D_0,D_1)$. We will
generalize this notion to trees with inputs in $\{0,1\}^k$.  The
uncertainty about $x_i$ is given by
$\uncertainty_i(\cT)\defeq\min_{\cT'} {\overlap(\cT')}$,
where the minimum is taken over all trees $\cT'$ 
obtained from $\cT$ by 
\begin{enumerate} 
\item[(a)] retaining the functions at nodes that query $x_i$ and 
\item[(b)] replacing the functions at nodes that query variables
    $x_j$ ($j\neq i$) by arbitrary constant functions;
\end{enumerate}
Finally, define $\uncertainty(\cT) = \min_{i} \uncertainty_i(\cT)$.
\end{definition}

For trees with multiple inputs, $\uncertainty_i(\cT)$ denotes the
uncertainty that remains about $x_i$ no matter how the decisions are
made at nodes that query other variables.  We can now state the
following crucial connection between broadcast protocols and
randomized decision trees which follows from arguments in \cite{Dutta08}.

\begin{proposition} 
\label{prop:protocol-to-tree}
If there is a broadcast protocol with $N d$ broadcasts for computing a
the majority of $N$ bits, then there is a randomized decision tree
$\cT$ for computing the majority of $N'\geq \sqrt{N}$ bits
such that $\beta(\cT) \geq \exp(-\exp(O(d)))$. In particular,
if $d \leq \frac{1}{C}\log \log N$ for some constant $C$, 
then $\beta(\cT) \geq N^{-\frac{1}{10}}$. 
\end{proposition}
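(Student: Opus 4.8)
The plan is to follow the reduction of Dutta, Kanoria, Manjunath and Radhakrishnan~\cite{Dutta08, Dutta08-journal}, whose ingredients are a geometric decomposition of the random network together with concentration of the placement, and the noisy-broadcast machinery of Goyal, Kindler and Saks~\cite{Goyal05}. First I would fix the geometry: partition the unit square into $\Theta(N^{2/3})$ congruent axis-parallel cells, each of side length a small constant times $R$, so that any two processors lying in non-adjacent cells are at distance greater than $R$ and hence never directly hear each other. A Chernoff bound shows that with probability $1-o(1)$ over the placement every cell is non-empty; and since the protocol makes only $Nd$ broadcasts, a Markov argument shows that all but an $o(1)$ fraction of cells are \emph{quiet}, meaning the processors they contain broadcast $O(d)$ times in total. (If, in the regime $R \le N^{-1/3}$, a cell contains too many processors for this clean statement, one first refines to cells of side $\Theta(N^{-1/2})$ and groups them into blocks, exactly as in~\cite{Dutta08}.) Designate one processor in each quiet cell as \emph{special}; this leaves $N' = \Theta(N^{2/3}) \ge \sqrt{N}$ special processors, and we may discard one so that $N'$ is odd. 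Finally, freeze the inputs of all non-special processors, setting exactly half of them to $1$; then the majority of the $N$ bits equals the majority of the $N'$ special bits, so the frozen protocol is a $\delta$-error broadcast protocol, using at most $Nd$ broadcasts, that computes the majority of the special bits.

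Next I would turn this frozen protocol into a randomized decision tree $\cT$ on the $N'$ special bits in the sense of Definition~\ref{def:randomizedtree}, each of whose internal nodes makes a single $\epsilon$-noisy query to one special coordinate and branches on the outcome. Here I would invoke the simulation underlying~\cite{Goyal05}: each noisy broadcast step is replaced by having the processor that reacts to it make instead its own fresh $\epsilon$-noisy query of the relevant coordinate, and the protocol is unrolled into such a tree. The technical heart, carried out in~\cite{Dutta08}, is the accounting: along any root-to-leaf path of $\cT$, each special bit $x_i$ is read at most $q := \exp(O(d))$ times. This doubly iterated bound is exactly the crux — it says that $O(d)$ broadcasts per processor can, in the most favorable case, be recursively amplified (in the manner of Gallager's protocol) into roughly $\exp(O(d))$ effective noisy samples of a bit but no more, and it is the near-threshold transmission radius that keeps this amplification from running away; cells whose special bit fails the bound are simply discarded, at the cost of only a constant factor in $N'$. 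I expect this step to be the main obstacle: a broadcast is heard by many processors through independent noise and is not tree-structured, so making the unrolling honest while simultaneously controlling $q$ is precisely the content of the~\cite{Dutta08}/\cite{Goyal05} machinery and not a routine calculation.

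It then remains to lower bound $\beta(\cT) = \min_i \beta_i(\cT)$. Fix a special coordinate $i$. By Definition~\ref{def:uncertainty} we replace the branching functions at all nodes that query coordinates $j \ne i$ by constants; the leaf reached then becomes a deterministic function of the at most $q$ independent $\epsilon$-noisy readings of $x_i$ taken at the $x_i$-nodes on the resulting path. A short induction on the number of such nodes (losing a factor $2\epsilon$ per level, using $\min\{(1-\epsilon)a,\,\epsilon b\} \ge \epsilon\min\{a,b\}$ since $\epsilon < \frac12$) gives $\overlap \ge (2\epsilon)^q$, and since overlap cannot decrease under post-processing (Definition~\ref{def:overlap}) the same bound passes to the leaf distributions; thus $\beta_i(\cT) \ge (2\epsilon)^q = \exp(-\exp(O(d)))$, and since this holds for every special $i$, $\beta(\cT) \ge \exp(-\exp(O(d)))$. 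Finally, when $d \le \frac{1}{C}\log\log N$ we have $\exp(O(d)) \le \frac{1}{10}\log N$ for a sufficiently large constant $C$ and all large $N$, whence $\beta(\cT) \ge \exp(-\frac{1}{10}\log N) = N^{-1/10}$, as claimed.
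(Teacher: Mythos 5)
The paper itself never writes out a proof of Proposition~\ref{prop:protocol-to-tree}: it asserts that the statement ``follows from arguments in \cite{Dutta08}'' and moves on. Your reconstruction therefore follows essentially the same route the paper points to --- cell decomposition of the random placement, selection of one special processor per quiet cell, balanced freezing of the remaining inputs so that majority on $N$ bits reduces to majority on the $N'$ special bits, and an appeal to the \cite{Goyal05} machinery as adapted in \cite{Dutta08} for the doubly-exponential bound --- and the parts you carry out yourself are fine: the freezing argument, the overlap induction giving $(2\epsilon)^q$ (which is exactly the paper's bottom-up $\beta_{v,i}$ recursion specialized to independent $\epsilon$-noisy single-bit queries, together with the fact that overlap cannot decrease under post-processing), and the final arithmetic turning $d\le\frac{1}{C}\log\log N$ into $\beta(\cT)\ge N^{-1/10}$.

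The step I would not accept as stated is the intermediate claim that the unrolling produces a tree in which each special bit is read at most $q=\exp(O(d))$ times along every root-to-leaf path, each read being one fresh $\epsilon$-noisy bit. That is not the form in which \cite{Goyal05} or \cite{Dutta08} deliver their conclusion, and as a literal statement it is doubtful: a single broadcast by processor $i$ is received by all $\Theta(NR^2)$ of its neighbours through independent noise, so conditioning on the transcript exposes many independent noisy observations of (functions of) $x_i$ per broadcast, and adaptivity can further concentrate reads of $x_i$ on favourable paths; nothing in ``$O(d)$ broadcasts by processor $i$'' caps the per-path count of independent noisy observations at $\exp(O(d))$. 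What the cited machinery actually establishes is the uncertainty bound directly: a potential-function (semi-norm) argument on generalized noisy decision trees shows that the measure of knowledge about $x_i$ extractable from the transcript grows at most by a constant factor per broadcast of $x_i$'s owner, and this yields $\beta_i(\cT)\ge\exp(-\exp(O(d)))$ without any read-count bound ever being proved. Your final inequality is the right one, but you route it through an intermediate statement that your sketch does not establish and that I suspect is false in general; the clean fix is to import the uncertainty bound itself (for the single-cell broadcast subprotocols) as the black box from \cite{Dutta08}, rather than the read-count bound, and then only the reduction from the planar network to per-cell broadcast protocols needs the geometric argument you describe.
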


The main contribution of this work is a technique for analyzing
randomized decision trees with non-trivial uncertainty. Roughly, we
show that if the uncertainty about a variable is non-trivial, then the
computation can be performed by leaving the variable unread with some
non-trivial probability. For example, consider a trivial tree with one
root and two leaves. At the root we read a variable $x\in\{0,1\}$, and
moves to the left child with probability $\frac{1}{2}+
(-1)^x\epsilon$. This tree can be simulated as follows. With
probability $1-2\epsilon$, we do not read $x$ at all, and move left or
right with equal probability. With probability $2\epsilon$ we read $x$
and move left if $x=0$ and move right otherwise. If all nodes are of
this kind and the tree has small depth, then it is not hard to see
that we can simulate its computation leaving several variables unread.
This is precisely the situation in the model of decision trees studied
by Evans and Pippenger~\cite{Evans99}, which explains why our technique is
effective there. However, applying this idea to the randomized
decision trees guaranteed by Proposition~\ref{prop:protocol-to-tree}
requires more careful analysis.  The detailed argument is presented in
Section~\ref{sec:main-theorem}.

\section{The Evans-Pippenger lower bounds revisited}
\label{sec:evans-pippenger}

In this section, we show that the three average case lower bounds of
Evans and Pippenger~\cite{Evans99} follow immediately by considering
sampling-based algorithms that arise naturally from noisy decision
trees. 

\begin{definition}[$(r,\delta)$-sampling algorithm] 
We say that a randomized algorithm $\cA$ is an $(r,\delta)$-sampling
algorithm if 
$\displaystyle
\Pr[\mbox{$\cA$ leaves at least $r$ variables unread}] \geq 1- \delta.$
\end{definition}

\begin{definition}[Robust function]
We say that $f:\{0,1\}^n \rightarrow \{0,1\}$ is an 
$(r, \gamma)$-robust function, if for every subcube $L$ of dimension $r$,
\[ \frac{1}{2} - \gamma \leq \Pr_{x \in L}[ f(x) = 1 ]
                        \leq \frac{1}{2} + \gamma.\]
\end{definition}

Once these definitions are in place, the proofs the results of
Evans and Pippenger~\cite{Evans99} follow easily from the following
observations. 

\begin{enumerate}
\item
$\epsilon$-noisy boolean trees of small depth 
can be simulated by randomized algorithms that typically 
leave many variables unread.

\begin{lemma}
\label{lm:noisy-to-sampling}
Suppose $\cT$ is an $\epsilon$-noisy boolean decision tree with $n$
variables and depth at most $kn$.  Then there is a randomized
algorithm $\cA$ that on all inputs simulates the computation on $\cT$ (producing the
same distribution on the leaves), and with probability
at least $1 - \exp(-\frac{\epsilon^{2k}n}{16})$, leaves at least
$\frac{\epsilon^{2k}}{4}n$ variables unread.
\end{lemma}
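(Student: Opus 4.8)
\noindent\textit{Proof plan.}
I would build the simulator $\cA$ node by node, exploiting the fact that an $\epsilon$-noisy query has overlap exactly $2\epsilon$ between its two ``next node'' distributions. At a node $v$ with label $i$, instead of reading $x_i$ and following the edge $x_i\oplus\eta_v$, the simulator tosses a coin: with probability $2\epsilon$ it does \emph{not} read $x_i$ and moves to a uniformly random child; with probability $1-2\epsilon$ it reads $x_i$ and moves to the child labelled $x_i$. A one-line check shows that for every input $x$ the child reached has the correct marginal ($1-\epsilon$ for the $x_i$-child, $\epsilon$ for the other), and since the coins at distinct nodes are independent, this reproduces the noisy tree's leaf distribution on \emph{all} inputs (and in particular $\cA$ never needs to look at the variable at a node where the coin says ``skip''). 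Thus $\cA$ is a legitimate simulator and the only remaining task is to bound the number of variables it leaves unread.

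For the analysis I would expose $\cA$'s randomness in the order it is consumed along the computation path (``deferred decisions''). In this view the successive read/skip coins encountered at the queries along the path form an i.i.d.\ sequence of $\mathrm{Bernoulli}(1-2\epsilon)$'s, each independent of the history, because which node we are at is determined by earlier coins only. A variable $x_i$ is left unread precisely when the coin comes up ``skip'' at every query to $x_i$ on the path. Since the path has length at most $kn$, at most $kn$ queries are made in total, so fewer than $n/2$ variables are queried more than $2k$ times; call the remaining (at least $n/2$) variables \emph{light}. Each light variable is left unread with probability at least $(2\epsilon)^{2k}\ge\epsilon^{2k}$, since all of its at most $2k$ query-coins must be ``skip''. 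Hence the expected number of unread light variables is at least $\tfrac12\epsilon^{2k}n$, and a Chernoff-type bound should give that it is at least $\tfrac14\epsilon^{2k}n$ except with probability $\exp(-\epsilon^{2k}n/16)$ --- exactly the claimed estimate (the constants $\tfrac14$ and $\tfrac1{16}$ match a Chernoff bound applied to a mean $\mu\ge\tfrac12\epsilon^{2k}n$ with deviation down to $\mu/2$).

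The delicate point, which I expect to be the main obstacle, is that ``$x_i$ is light'' and ``$x_i$ is left unread'' are functions of the \emph{same} coins: conditioning on ``all of $x_i$'s query-coins are skip'' can only make the path re-query $x_i$ more often, so the light-variable indicators are not independent (and can be mildly anti-correlated), and a naive Chernoff bound over them is not justified. I would handle this by keeping the exposure strictly sequential and running a martingale argument: maintain $\Phi_t$, the number of still-unread variables after $t$ path steps, so that $\Phi_0=n$, each step decreases $\Phi$ by at most one, and at a step querying an as-yet-unread variable the decrease occurs with conditional probability exactly $1-2\epsilon$ (and $0$ otherwise); the variable that is ``used up'' at a given step is determined before its coin is revealed. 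Since at most $kn$ steps consult a read/skip coin and each variable can be decremented only once, the total decrease $n-\Phi_{\mathrm{end}}$ is a sum of conditionally independent-in-mean increments, and applying Azuma's inequality (equivalently, a Doob martingale on the sequential exposure, or establishing negative association of the relevant indicators) yields the required concentration for the number of unread light variables; one then tunes the constants so the resulting tail is at most $\exp(-\epsilon^{2k}n/16)$. The routine parts are the per-node overlap check and the final Chernoff/Azuma computation; the substance is setting up the sequential exposure so that the adaptive re-querying does not break concentration.
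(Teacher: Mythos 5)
Your simulator is exactly the one the paper uses, and your expectation computation (at least $n/2$ ``light'' variables on any root-to-leaf path, each unread with probability at least $\epsilon^{2k}$) is also the right one. The gap is in the concentration step, which you correctly flag as the crux but then resolve with a mechanism that does not deliver the claimed bound. The clean fix, which the paper uses, is to \emph{condition on the leaf $\ell$ reached}. Given $\ell$, the path is fixed, so the set of light variables is deterministic; and since the per-node randomness is independent across nodes and the nodes querying distinct variables form disjoint sets, the indicators ``$x_i$ is unread'' for distinct $i$ are \emph{mutually independent} given $\ell$ --- each is a product, over the $k_i$ occurrences of $x_i$ on the path, of conditional skip-probabilities, each at least $\epsilon/(1-\epsilon)$, giving at least $\epsilon^{k_i}$ in total. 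A plain Chernoff bound on at least $n/2$ conditionally independent indicators with mean at least $\epsilon^{2k}$ then gives the tail $\exp(-\epsilon^{2k}n/16)$ for every leaf, hence unconditionally. No martingale is needed.

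Your proposed sequential-exposure/Azuma route has two concrete problems. First, bounded differences are not available: flipping one read/skip coin reroutes the entire remaining path, so a single exposed coin can change the conditional expectation of the final unread count by far more than $O(1)$; your $\Phi_t$ process likewise requires controlling both the martingale part and the fluctuation of the running sum of conditional means, which is circular as sketched. Second, and decisively, even if an Azuma-type bound with $O(1)$ differences over the $\le kn$ exposed coins were justified, it would yield a tail of order $\exp\bigl(-(\epsilon^{2k}n)^2/(kn)\bigr)$. In the regime where the lemma is actually applied one has $\epsilon^{2k}n=\Theta(r)$ with $r$ as small as $O(\log\log n)$, so that exponent is $o(1)$ and the bound is vacuous; the lemma needs $\epsilon^{2k}n$ itself, not its square divided by $kn$, in the exponent. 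The parenthetical appeal to negative association of the unread indicators is likewise not established and becomes unnecessary once you condition on the leaf.
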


This lemma is the key to the analysis in this section. We present its
elementary proof below.

\item Our next observation states that randomized algorithms of the
kind promised by the above theorem cannot compute a robust function
with small error.

\begin{lemma} \label{lm:add-error}
If $\cA$ is an $(r,\delta)$-sampling algorithm computing 
an $(r,\gamma)$-robust function $f$. Then,
\[ \Pr_{x \in \{0,1\}^n}[\cA(x) = f(x)] 
     \leq \frac{1}{2} + \delta + \gamma.
\]
\end{lemma}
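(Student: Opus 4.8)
The plan is to condition on the event that $\cA$ leaves at least $r$ variables unread, and to argue that on that event $\cA$ cannot do much better than a coin flip, because the function it is trying to compute is nearly balanced on every $r$-dimensional subcube. First I would set $E$ to be the event that $\cA$ leaves at least $r$ variables unread, so that $\Pr[E] \geq 1 - \delta$ by hypothesis. Then I would crudely bound
\[
\Pr_{x}[\cA(x) = f(x)] \leq \Pr[\overline{E}] + \Pr_{x}[\cA(x) = f(x) \mid E] \cdot \Pr[E] \leq \delta + \Pr_{x}[\cA(x) = f(x) \mid E],
\]
so the whole task reduces to showing $\Pr_{x}[\cA(x) = f(x) \mid E] \leq \tfrac{1}{2} + \gamma$.

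To handle the conditional term, the idea is to expose the input and the randomness of $\cA$ in the right order. Condition further on the internal randomness of $\cA$ and on the values of the variables that $\cA$ actually reads; on the event $E$, what remains is a set $S$ of at least $r$ unread coordinates, and — crucially — the output $\cA(x)$ is already determined by this conditioning, call it $b$, whereas the values of the coordinates in $S$ are still uniformly distributed and independent of everything exposed so far. Enlarging $S$ if necessary to a set of exactly $r$ coordinates, the conditional distribution of $x$ ranges uniformly over an $r$-dimensional subcube $L$ (with the read coordinates fixed). Since $f$ is $(r,\gamma)$-robust, $\Pr_{x \in L}[f(x) = b] \leq \tfrac{1}{2} + \gamma$ regardless of the value of $b \in \{0,1\}$. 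Averaging this bound over all the conditionings consistent with $E$ gives $\Pr_{x}[\cA(x) = f(x) \mid E] \leq \tfrac{1}{2} + \gamma$, and combining with the display above yields the claim.

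The one point that needs care — and the only place the argument could go wrong — is the assertion that, after exposing $\cA$'s randomness and the read coordinates, the unread coordinates are still uniform and independent. This is really the definition of what it means for a coordinate to be ``unread'': the algorithm's entire transcript (its sequence of queries, its answers, and hence its output) is a function only of the coordinates it queried together with its internal coin tosses, so the posterior on the un-queried coordinates is unchanged from the uniform prior. I would state this explicitly as the crux of the proof. Everything else — the union bound over $\overline{E}$ and the averaging — is routine, and the robustness hypothesis is applied exactly once, uniformly over all subcubes, which is why the bound comes out as $\tfrac{1}{2} + \delta + \gamma$ with no further loss.
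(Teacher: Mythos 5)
Your proposal is correct and follows the same route as the paper: split off the event that fewer than $r$ variables are read (cost $\delta$), then on the complementary event apply $(r,\gamma)$-robustness to the subcube of unread coordinates. The paper states this in two sentences; your write-up merely makes explicit the one point the paper leaves implicit, namely that after conditioning on the transcript the unread coordinates remain uniform over an $r$-dimensional subcube.
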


\item The three types of functions considered by Evans and
Pippenger are robust.

\begin{lemma}\label{lm:robust-functions}
\begin{enumerate}
\item[(a)] Almost all functions $f:\{0,1\}^n \rightarrow \{0,1\}$ are $(6\log\log n, o(1))$-robust.
\item[(b)] The parity function is $(1,0)$-robust.
\item[(c)] A $t$-resilient function is $(n-t,0)$-robust. 

\end{enumerate}
\end{lemma}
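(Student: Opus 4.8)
The plan is to treat the three parts separately: parts (b) and (c) come straight out of the definitions, while part (a) is a first-moment argument over a uniformly random function. For part (b), I would take an arbitrary subcube $L$ of dimension $1$; it consists of two points that agree in every coordinate except one, say coordinate $i$. Since $\parity$ is affine in each variable, flipping coordinate $i$ flips the value of $\parity$, so exactly one of the two points of $L$ is a $1$ of $\parity$, giving $\Pr_{x\in L}[\parity(x)=1]=\frac{1}{2}$ exactly; hence $\parity$ is $(1,0)$-robust. This is the strongest possible form, since $(1,0)$-robustness implies $(r,0)$-robustness for every $r\ge 1$ by averaging, which is exactly what one feeds into Lemmas~\ref{lm:noisy-to-sampling} and~\ref{lm:add-error}. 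For part (c), I would just unwind the definition of resilience: $f$ being $t$-resilient means that after fixing any $t$ of the coordinates to arbitrary values the induced function on the remaining $n-t$ variables is balanced. But a subcube of dimension $n-t$ is precisely what one obtains by fixing some set of $t$ coordinates to fixed values and leaving the other $n-t$ free, so $\Pr_{x\in L}[f(x)=1]=\frac{1}{2}$ for every such $L$, i.e., $f$ is $(n-t,0)$-robust. (Combined with Lemmas~\ref{lm:noisy-to-sampling} and~\ref{lm:add-error}, taking $s=n/(n-t)$, this yields the Evans--Pippenger resiliency bound.)

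For part (a), I would draw $f:\{0,1\}^n\to\{0,1\}$ uniformly at random from all $2^{2^n}$ Boolean functions and show that with probability $1-o(1)$ it is $(r,\gamma)$-robust for $r=6\log\log n$ and a suitable slack $\gamma=\gamma(n)=o(1)$. The key observation is that for any \emph{fixed} subcube $L$ of dimension $r$, the restriction $(f(x))_{x\in L}$ is a string of $2^r$ independent fair bits, so $\Pr_{x\in L}[f(x)=1]$ is an average of $2^r$ independent $\{0,1\}$ variables of mean $\frac{1}{2}$, and Hoeffding's inequality bounds the probability that it deviates from $\frac{1}{2}$ by more than $\gamma$ by $2\exp(-2\gamma^2 2^r)$. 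Since there are ${n\choose r}2^{n-r}$ subcubes of dimension $r$, a union bound shows that $f$ fails to be $(r,\gamma)$-robust with probability at most ${n\choose r}2^{n-r}\cdot 2\exp(-2\gamma^2 2^r)$, and it remains to check that with $r=6\log\log n$ one can take $\gamma\to 0$ slowly enough that this quantity is $o(1)$; this is the only computation in the argument. Once (a) is in hand, Lemmas~\ref{lm:noisy-to-sampling} and~\ref{lm:add-error} convert the robustness of almost every $f$ into the bound $\depth_{\epsilon,3/4}(f)=\Omega(n\log n/\log(1/\epsilon))$.

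The part that needs care is (a): there is no real obstacle in (b) or (c), but in (a) everything hinges on the interplay between the per-subcube Hoeffding tail, which decays only like $\exp(-\Theta(\gamma^2 2^r))$, and the roughly $2^n$ subcubes being union-bounded over — one must pick the dimension and the slack so that the former wins while still keeping $\gamma=o(1)$, and then confirm that this dimension, once pushed through the sampling reduction, still delivers the advertised $\Omega(n\log n/\log(1/\epsilon))$ bound. Parts (b) and (c) already show the argument is essentially vacuous for parity and for resilient functions, where $\gamma$ can be taken to be $0$ and no concentration is needed at all.
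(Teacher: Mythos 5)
Parts (b) and (c) of your argument are fine and match what the paper intends (the paper simply says these follow from the definitions). For part (a) you set up exactly the first-moment argument the paper uses, and you even count the subcubes correctly as $\binom{n}{r}2^{n-r}$ (the paper's own proof undercounts them as $\binom{n}{d}$). But the one step you defer --- ``it remains to check that with $r=6\log\log n$ one can take $\gamma\to 0$ slowly enough that this quantity is $o(1)$'' --- is precisely the step that fails. With $r=6\log\log n$ you have $2^r=(\log n)^6$, so the Hoeffding tail for a single subcube is $2\exp(-2\gamma^2(\log n)^6)$, while the number of subcubes is $\binom{n}{r}2^{n-r}\geq 2^{n-r}$. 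For the union bound to give $o(1)$ you would need $\gamma^2(\log n)^6=\Omega(n)$, i.e.\ $\gamma=\Omega(\sqrt{n}/(\log n)^3)$, which is incompatible with $\gamma\leq\frac{1}{2}$, let alone $\gamma=o(1)$. So no choice of slack rescues the union bound at this dimension.

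Moreover the difficulty is not an artifact of the proof: part (a) is false as stated. Partition $\{0,1\}^n$ into $2^{n-r}$ disjoint subcubes of dimension $r=6\log\log n$ by fixing the last $n-r$ coordinates. For a uniformly random $f$, each such subcube is monochromatic independently with probability $2^{1-2^r}=2^{1-(\log n)^6}$, so the probability that none is monochromatic is at most $\exp\left(-2^{n-r+1-(\log n)^6}\right)=o(1)$. Hence almost every $f$ has bias $\frac{1}{2}$ on some subcube of dimension $6\log\log n$ and is not $(6\log\log n,o(1))$-robust. The repair is to take $r=\Theta(\log n)$, e.g.\ $r=4\log n$ with $\gamma=2^{-r/3}=o(1)$: then $\gamma^2 2^r=2^{r/3}=n^{4/3}\gg n$, the union bound over at most $3^n$ subcubes goes through, and this larger $r$ still feeds into Lemmas~\ref{lm:noisy-to-sampling} and~\ref{lm:add-error} to give $\depth_{\epsilon,3/4}(f)=\Omega\left(n\log(n/(8r))/\log(1/\epsilon)\right)=\Omega(n\log n/\log(1/\epsilon))$, so the downstream theorem is unaffected. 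You correctly identified where the delicacy lies; you (like the paper) just did not carry out the computation that shows the advertised dimension does not work.
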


\end{enumerate}

From Lemma~\ref{lm:noisy-to-sampling} it follows immediately that any
$\epsilon$-noisy decision tree of depth at most $\frac{n}{2}
\frac{\log(n/(8r))}{\log(1/\epsilon)}$, can be simulated using an
$\left(r, \frac{1}{8}\right)$-sampling algorithm. Then, from
Lemma~\ref{lm:add-error} it follows that any such tree for an
$(r,\delta)$-robust function makes error at least
$\frac{1}{2}+\frac{1}{8} + \delta$.  By combining this with
Lemma~\ref{lm:robust-functions}, we obtain the following.

\begin{theorem}[Evans and Pippenger~\cite{Evans99}]
\begin{enumerate}
\item $\depth_{\epsilon,\frac{3}{4}}(f) = \Omega(n (\log n)/\log (1/\epsilon))$; for almost all functions;

\item $\depth_{\epsilon,\frac{3}{4}}(\parity_n) = \Omega(n (\log n)/\log (1/\epsilon))$;

\item $\depth_{\epsilon,\frac{3}{4}}(f) = \Omega(n \log s)$ if $f$ is $n\left(1-\frac{1}{s}\right)$-resilient.
\end{enumerate}
\end{theorem}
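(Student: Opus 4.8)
The plan is to deduce all three bounds from a single argument that chains Lemmas~\ref{lm:noisy-to-sampling}, \ref{lm:add-error} and~\ref{lm:robust-functions}, the only real work being to line up the parameters. I would fix $f$ on $n$ variables, let $(r,\gamma)$ be the robustness parameters that Lemma~\ref{lm:robust-functions} provides for $f$, and suppose for contradiction that some $\epsilon$-noisy boolean decision tree $\cT$ computes $f$ correctly with probability at least $\frac34$ (over a uniform input and the internal noise) while $\depth(\cT)\le \frac{n}{2}\cdot\frac{\log(n/(8r))}{\log(1/\epsilon)}$. The goal is to show this is impossible whenever $\gamma=o(1)$ and $r$ is not too small, which gives $\depth_{\epsilon,\frac34}(f)=\Omega\!\big(n\log(n/r)/\log(1/\epsilon)\big)$, and then to substitute the three robustness estimates.

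For the first step I would set $k := \frac{\log(n/(8r))}{2\log(1/\epsilon)}$; a one-line computation then gives $\epsilon^{2k}=8r/n$, so $\frac{\epsilon^{2k}}{4}n = 2r \ge r$ and $\exp(-\epsilon^{2k}n/16)=\exp(-r/2)$. Since $\depth(\cT)\le kn$, feeding $\cT$ to Lemma~\ref{lm:noisy-to-sampling} produces a randomized algorithm $\cA$ with the same output distribution as $\cT$ on every input which, except with probability $\exp(-r/2)$, leaves at least $r$ variables unread; provided $r$ exceeds a fixed absolute constant this exceptional probability is at most $\frac18$, so $\cA$ is an $(r,\frac18)$-sampling algorithm. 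The second step applies Lemma~\ref{lm:add-error} to $\cA$ and the $(r,\gamma)$-robust $f$: it gives $\Pr_x[\cA(x)=f(x)] \le \frac12+\frac18+\gamma = \frac58+\gamma$, and since $\gamma=o(1)$ (indeed $\gamma=0$ in parts (b) and (c)) this is below $\frac34$ for $n$ large, contradicting the assumption on $\cT$. Hence $\depth(\cT) > kn$, i.e.\ $\depth_{\epsilon,\frac34}(f)=\Omega\!\big(n\log(n/r)/\log(1/\epsilon)\big)$.

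It remains to substitute Lemma~\ref{lm:robust-functions}. For part~(a), almost every $f$ is $(6\log\log n,\,o(1))$-robust, so $r=6\log\log n$ — comfortably above the constant threshold for large $n$ — and $\log(n/r)=\log n - O(\log\log\log n)=\Omega(\log n)$, giving $\depth_{\epsilon,\frac34}(f)=\Omega\!\big(n(\log n)/\log(1/\epsilon)\big)$. For parts~(b) and~(c) I would use a mild extension of the stated robustness: restricting $\parity_n$ to a subcube of dimension $r\ge1$ leaves an exact parity on the free bits, so $\parity_n$ is $(r,0)$-robust for every $r\ge1$; and a subcube of dimension $r$ fixes $n-r$ coordinates, so an $n(1-1/s)$-resilient $f$ is $(r,0)$-robust for every $r\ge n/s$. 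Taking $r$ to be a large enough constant in the parity case gives $\log(n/r)=\Omega(\log n)$ and hence $\depth_{\epsilon,\frac34}(\parity_n)=\Omega\!\big(n(\log n)/\log(1/\epsilon)\big)$; taking $r$ of order $\max\{1,n/s\}$ in the resilient case gives $\log(n/r)=\Omega(\log s)$ and hence $\depth_{\epsilon,\frac34}(f)=\Omega\!\big(n(\log s)/\log(1/\epsilon)\big)$, which is $\Omega(n\log s)$ at constant noise.

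The one place needing care — and really the only subtle point of the theorem itself — is that the sampling conversion of Lemma~\ref{lm:noisy-to-sampling} fails with probability $\exp(-\Omega(r))$, so $r$ must be bounded below by a constant before Lemma~\ref{lm:add-error}'s fixed $\frac18$ slack can absorb it. This is exactly why the bare values $r=1$ for parity and $r=n/s$ for a resilient function do not by themselves suffice and must be replaced by the observation, used above, that these functions stay robust across a whole range of $r$. Everything else is bookkeeping; the genuine content sits inside Lemma~\ref{lm:noisy-to-sampling}.
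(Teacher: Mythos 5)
Your proposal is correct and follows essentially the same route as the paper: chain Lemma~\ref{lm:noisy-to-sampling} (with $k=\log(n/(8r))/(2\log(1/\epsilon))$) into Lemma~\ref{lm:add-error}, and then substitute the robustness parameters from Lemma~\ref{lm:robust-functions}. Your extra observation that parity and resilient functions remain $(r,0)$-robust for every larger $r$, so that $r$ can be pushed above the constant threshold needed to make $\exp(-r/2)\le\frac18$, correctly patches a detail the paper glosses over when it quotes only $r=1$ and $r=n/s$.
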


It remains to prove the lemmas claimed above.

\begin{proof}[Proof of Lemma~\ref{lm:noisy-to-sampling}]
The randomized algorithm works by simulating the computation by the
noisy boolean decision tree starting at the root. The algorithm has a
boolean random variable $B_v$ for each internal node $v$ of the
decision tree. Each $B_v$ takes the value $1$ with probability
$2\epsilon$ independently. At internal node $v$ of the tree, if $B_v =
1$, the sampling algorithm chooses one of the outgoing edges with
probability half each, to get to the next node. If $B_v = 0$, the
algorithm reads the value of the input variable $x_{l(v)}$ (without
any error), and chooses the outgoing edge with that label. On reaching
leaf $\ell(x)$, which is a random variable, the algorithm outputs
$\val(\ell(x))$ as the result. It is easy to see that for any input
$x$, the distribution on leaves reached by the sampling algorithm is
exactly the same as that reached by the noisy decision tree.
 
Fix an input $x$ and a leaf $\ell$ reached by the simulation. We will
show that conditioned on arriving at this leaf, the algorithm leaves
at least $\frac{\epsilon^{2k}}{4} n$ variables unread with high
probability. Suppose the variable $x_i$ appears $k_i$ times in $\cT$
on the path from the root to $\ell$. Then, the probability that $x_i$
is not read conditioned on the computation reaching $\ell$ is at least
$\left(\frac{\epsilon}{1-\epsilon}\right)^{k_i} \geq \epsilon^k_i$.
Since the depth of the tree is at most $nk$, there are at least
$\frac{n}{2}$ variables that appear at most $2k$ times on the path to
$\ell$. Each of these variables is independently left unread with
probability at least $\epsilon^{2k}$.  Using the Chernoff bound, with
probability at least $1 - \exp(-\frac{\epsilon^{2k}n}{16})$, the
algorithm leaves at least $\frac{\epsilon^{2k}n}{4}$ variables unread.
Since this claim is true conditioned on each leaf, it also holds
overall.
\end{proof}

\begin{proof}[Proof of Lemma~\ref{lm:add-error}]
The probability that fewer than $r$ variables are left unread is at
most $\delta$. Conditioned on the algorithm leaving $r$ variables
unread, the probability that its output is correct is at most
$\frac{1}{2}+\gamma$ because $f$ is $(r,\gamma)$-robust.
\end{proof}

\begin{proof}[Proof of Lemma~\ref{lm:robust-functions}]
The second and third claims follow immediately from definitions.
We justify the first claim using the following routine calculation.
Consider a $d$-dimensional subcube of the boolean hypercube
$\{0,1\}^n$. Pick a random function $f$, and let $X$ be the random
variable denoting the number of points in the subcube where $f$ takes
the value $1$. We have $\E[X] = 2^d/2$. Let $t =
\frac{1}{2^{d/3}}$. By the Chernoff bound, $\Pr[|X-\E[X]| > t 2^d] < 2
\exp(- t^2 2^d) = 2 exp(- 2^{d/3})$. Taking the union bound over all
subcubes of dimension $d$, the probability that $f$ has a bias of more
than $t = \frac{1}{2^{d/3}}$ on any such subcube is at most ${n
\choose d} \left( 2 \exp(- 2^{d/3}) \right)$.  Thus, for $d \geq 6
\log \log n$, with probability $1 - o(1)$, $f$ has $o(1)$ bias on
every subcube of dimension $d$.
\end{proof}
\section{Proof of main Theorem}
\label{sec:main-theorem}

In this section we prove our lower bound on the number of transmissions
needed to compute majority in a noisy wireless network. 
By Proposition~\ref{prop:protocol-to-tree}, it is enough
to show that randomized decision trees with high uncertainty cannot
compute majority.  We will first show how such trees can be simulated
by sampling algorithms.  The result follows from this
because it is straightforward to verify that sampling algorithms
that leave a super-constant number of variables unread cannot
compute majority with low error. 

\begin{definition}[Sampling-based algorithm]
By a sampling-based algorithm for computing a function $f:\{0,1\}^n
\rightarrow\{0,1\}$, we mean an algorithm of the following kind.  In
the sampling phase, the algorithm uses $n$ {\em sampling
probabilities} $q_1,q_2,\ldots,q_n \in \{0,1\}$. Given an input
$x=\langle{x_1,x_2,\ldots,x_n}\rangle$, the algorithm constructs a
string $y \in \{0,1, \star\}^n$ from $x$, by independently replacing
$x_i$ by a $\star$ with probability $1-q_i$. In the second phase, it
declares its guess for $f(x)$ based on $y$ alone.
\end{definition}

The main part of the argument is contained in the following theorem.

\begin{theorem} 
\label{thm:randomized-to-sampling}
Let $\cT$ be a randomized decision tree with inputs
from $\{0,1\}^k$ that computes a function $f$ with error at most
$\delta$.  Then, there is a sampling algorithms $\cA$ which
independently samples the $i$-th variable with probability
$q_i=1-\uncertainty_i(\cT)$, and computes $f$ with error at most
$\delta$.  
\end{theorem}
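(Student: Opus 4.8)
The plan is to simulate the randomized decision tree $\cT$ by a sampling algorithm $\cA$ in which, for each variable $x_i$ independently, we \emph{decide in advance} (before seeing any input) whether to ``read'' $x_i$ or to ``fake'' it. Concretely, for each $i$ we flip a coin that comes up ``fake'' with probability exactly $\uncertainty_i(\cT)$; the set $S$ of variables we commit to reading is thus a random subset, with $i\in S$ independently with probability $q_i = 1-\uncertainty_i(\cT)$. Given the sampled string $y$ (which reveals $x_i$ precisely for $i\in S$), the algorithm then runs a modified version of $\cT$: at a node querying a read variable $x_i$ ($i\in S$) it behaves exactly as $\cT$ does, using the true value $x_i$ and the node's internal noise; at a node querying a faked variable $x_i$ ($i\notin S$) it must produce a move \emph{without looking at $x_i$}, and the claim will be that it can do so in a way that reproduces $\cT$'s leaf distribution. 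Finally $\cA$ outputs the label of the leaf reached. If this simulation is faithful for every input $x$, then $\cA$ computes $f$ with exactly the same error $\delta$, which is what we want.

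The heart of the matter is the simulation of the faked variables, and here the definition of $\uncertainty_i$ via \emph{overlap} is exactly the right tool. The key observation is the single-variable case spelled out informally before the theorem: a randomized decision tree on a \emph{single} bit $x$, with leaf distributions $D_0$ (when $x=0$) and $D_1$ (when $x=1$), can be simulated by a process that with probability $\overlap(D_0,D_1)$ ignores $x$ entirely and samples a leaf from the common-overlap distribution $M(\ell)=\min\{D_0(\ell),D_1(\ell)\}/\overlap(D_0,D_1)$, and with the remaining probability reads $x$ and samples from the appropriate residual distribution. I would first isolate this as a clean lemma: for any two distributions $D_0,D_1$ on a finite set, and any $p\le\overlap(D_0,D_1)$, there is a coupling/splitting so that a ``don't-read'' branch of probability $p$ followed by a fixed distribution, and a ``read'' branch of probability $1-p$ followed by an $x$-dependent distribution, together reproduce $D_x$ for both $x=0,1$. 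The point is that $p=\uncertainty_i(\cT)$ is, by definition of $\uncertainty_i$, a valid choice of $p$ \emph{no matter how the nodes querying the other variables behave}: fixing those other nodes to arbitrary constants yields a single-bit tree $\cT'$ with $\overlap(\cT')\ge\uncertainty_i(\cT)$, so the splitting with parameter $\uncertainty_i(\cT)$ always goes through. This is precisely why the definition takes the minimum over all settings of the other nodes.

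I would then assemble the global simulation by applying this splitting one variable at a time, or equivalently by a hybrid argument over the $k$ variables. Start from $\cT$ itself; for $i=1,\dots,k$ in turn, replace the treatment of $x_i$ by the split version with parameter $\uncertainty_i(\cT)$, conditioning on the (already fixed, possibly faked) behavior of variables $1,\dots,i-1$ and treating variables $i+1,\dots,k$ as worst-case constants when invoking the single-variable lemma. Since each step preserves the leaf distribution on every input, after $k$ steps we obtain an algorithm whose ``read/fake'' decision for $x_i$ is an independent coin of bias $\uncertainty_i(\cT)$, and whose leaf distribution on each input $x$ equals that of $\cT$. Because the read/fake coins are independent across $i$ and the fake branches never inspect the corresponding $x_i$, this is exactly a sampling-based algorithm with sampling probabilities $q_i=1-\uncertainty_i(\cT)$, and its error equals $\delta$.

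The step I expect to be the main obstacle is making the hybrid argument genuinely rigorous in the presence of the tree structure: a variable $x_i$ is queried at \emph{many} nodes along many root-to-leaf paths, so ``splitting $x_i$'' is not a single local operation but a simultaneous modification at all nodes labeled $i$, and one must check that the overall process still decomposes as ``one independent coin for $x_i$, then a branch.'' The clean way to handle this is to think of the whole sub-behavior of $\cT$ restricted to node-queries of $x_i$ (with everything else fixed) as a \emph{single} randomized map from $x_i\in\{0,1\}$ to leaves — that is exactly the single-bit tree $\cT'$ in Definition~\ref{def:uncertainty} — and apply the splitting lemma to that composite object rather than node-by-node. Care is also needed that the internal randomness of $\cT$ at the non-$i$ nodes, which in the real tree is genuinely random (not constant), is consistent with the worst-case-constant bound used to lower-bound the overlap; this works because $\uncertainty_i(\cT)$ is a \emph{minimum} over those settings, so it is simultaneously a valid splitting parameter for every realization of that randomness, and one may condition on it.
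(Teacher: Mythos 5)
Your single-variable step is exactly the right idea, and it is the same idea the paper uses: since $\uncertainty_i(\cT)$ is a \emph{minimum} of $\overlap(D_0,D_1)$ over all constant settings of the nodes not labelled $i$, for every realization of the choices at those nodes one may write $D_{x_i}=\uncertainty_i(\cT)\cdot M+(1-\uncertainty_i(\cT))\cdot R_{x_i}$ with $M$ independent of $x_i$, which is precisely the read/fake split with the advertised fake probability. The gap is in the composition over the $k$ variables --- the step you yourself flag as the main obstacle. Your proposed fix, namely applying the splitting to the \emph{composite} channel from $x_i$ to the leaves conditioned on the realized choices at all other nodes, works for one variable but does not compose: once $x_j$ has been handled this way, its fake branch samples a leaf directly from the common-overlap distribution and no longer produces node-level choices at the nodes labelled $j$; but those node-level choices are exactly the data you must condition on in order to define the composite channel (and to invoke the overlap bound) for the next variable $x_i$. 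After one hybrid step the object is no longer a decision tree, the intermediate algorithm is no longer of the required form, and the hypothesis needed for the next step is no longer available. Treating the unprocessed variables ``as worst-case constants'' supplies the right numerical bound $\overlap\geq\uncertainty_i(\cT)$ but not a joint simulation in which the $k$ read/fake coins are simultaneously independent and each fake branch ignores its variable.

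The paper closes exactly this gap by realizing the split \emph{locally}, so that the simulated object remains a tree at every stage and all variables are handled at once. It uses the product structure $\Pr[\mbox{reach }v\mid x]=\prod_i p_{v,i}(x_i)$ of the per-variable events $\chi_{v,i}$, computes bottom-up quantities $\beta_{v,i}$ (summing over children at nodes labelled $i$, taking the minimum at other nodes) with $\beta_{\mathsf{root},i}=\uncertainty_i(\cT)$, and from these defines per-node transition probabilities: $\alpha_v=\beta_{v_L,i}/\beta_{v,i}$ when $x_i$ is unavailable, and $\bigl(p_{v_L,i}(z)-\tilde{\beta}_{v_L,i}\bigr)/\bigl(p_{v,i}(z)-\tilde{\beta}_{v,i}\bigr)$ when $x_i=z$ is read, where $\tilde{\beta}_{v,i}$ is the unread mass routed to $v$. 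A top-down induction then verifies that every factor $p_{v,i}(z)$ is preserved, hence so is the leaf distribution, while the unavailable-case transitions never consult $x_i$. To complete your argument you would need to supplement the composite-object viewpoint with such a node-level realization of the split (or an explicit joint coupling of all $k$ channels); as written, the multi-variable step does not go through.
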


\paragraph{Preliminaries.} 
First, we need some notation.  For an internal node $v$ of the tree,
let $v_L$ denote its left child, and $v_R$ its right child.  Suppose
the input $x_i$ is queried at a $v$. The tree $\cT$ specifies the
probabilities for the computation to move to each child for each
possible value of $x_i$. 
For the node $v$, let $\chi_v$ be
the event that the computation on $\cT$ reaches the node $v$. Clearly,
$\chi_v$ is the intersection of independent events $\{\chi_{v,i}:
i=1,2,\ldots,k\}$, where $\chi_{v,i}$ is the event that the
computation reaches node $v$ assuming and the choices at all nodes not
labelled $i$ don't leave the path. Note that the probability of
$\chi_{v,i}$ depends only on the value of $x_i$; let $p_{v,i}(z)$ be
this probability. Then, the probability of the computation reaching
the node $v$ on input $x=\langle x_1,\ldots,x_k \rangle$ is precisely
$\prod_{i=1}^k p_{v,i}(x_i)$.

Before formally stating the proof of the theorem, it will be useful to
present a natural method for `computing' $\uncertainty_i(\cT)$. This
method works bottom up, assigning a value $\beta_{v,i}$ to the node
$v$ of the tree. It will turn out that
$\beta_{\mathsf{root},i}=\uncertainty_i(\cT)$. The intermediate values
$\beta_{v,i}$ produced in this algorithm, will be used crucially when
our final sampling algorithms simulates the computation of $\cT$.  In
fact, $\beta_{v,i}$ has the following natural interpretation.
Consider a tree $\cT'$ as in the definition of $\uncertainty_i(\cT)$.
Let $D_0$ and $D_1$ be the distributions on the leaves of $\cT'$ when
$x_0$ is set to $0$ and $1$. The minimum overlap (over all possible
such $\cT'$) between $D_0$ and $D_1$ when restricted to the leaves in
the subtree rooted at $v$, is the quantity $\beta_{v,i}$. With this
interpretation, consider the following computation.
\begin{itemize} 
\item for a leaf $v$, let 
       $\beta_{v,i} =\min\{p_{v,i}(0), p_{v,i}(1)\}$.  
\item for an internal node $v$ with
children $v_L$ and $v_R$, where $x_j$ is queried, 
\[ \beta_{v,j} =
\left\{ 
\begin{array}{l l} 
\min\{\beta_{v_L,i},\beta_{v_R,i}\} & \mbox{\ \ if } i \neq j\\ 
\beta_{v_L,i}+\beta_{v_R,i} & \mbox{\ \ if } i=j 
\end{array} \right. 
\] 
\end{itemize}
The following claim, which we state without a formal proof, 
is now immediate.
\begin{proposition} For $i=1,2,\ldots,k$ we have
\begin{eqnarray*}
\beta_{\mathsf{root},i}&=&\uncertainty_{i}(\cT)\\
\uncertainty_{v,i} &\leq& \min\{p_{v,i}(0), p_{v,i}(1)\}.
\end{eqnarray*}
\end{proposition}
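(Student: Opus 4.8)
The plan is to unwind the definition of $\uncertainty_i(\cT)$ and match it, node by node, against the bottom-up recursion. The first thing to record is a concrete description of the trees $\cT'$ appearing in Definition~\ref{def:uncertainty}: replacing the function at a node $v$ querying some $x_j$ with $j\neq i$ by a constant function is the same as deleting one of the two subtrees hanging below $v$, so a choice of $\cT'$ is exactly a \emph{pruning} of $\cT$ that selects, at every node querying a variable other than $x_i$, one of its two children. Let $S$ be the set of leaves of $\cT$ that survive such a pruning. In $\cT'$ the only nodes whose behaviour depends on the input or on internal randomness are those querying $x_i$, and a surviving leaf $\ell\in S$ is reached precisely when every node querying $x_i$ on the root-to-$\ell$ path sends the computation along that path. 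Since the internal randomness at distinct nodes is independent, the leaf distributions $D_0,D_1$ of $\cT'$ satisfy $D_0(\ell)=p_{\ell,i}(0)$ and $D_1(\ell)=p_{\ell,i}(1)$ for $\ell\in S$, and are $0$ elsewhere; hence $\overlap(\cT')=\sum_{\ell\in S}\min\{p_{\ell,i}(0),p_{\ell,i}(1)\}$, and $\uncertainty_i(\cT)$ is the minimum of this sum over all prunings $S$.

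For the first identity I would then prove, by structural induction from the leaves upward, that $\beta_{v,i}$ equals the minimum over all prunings of $\sum_\ell\min\{p_{\ell,i}(0),p_{\ell,i}(1)\}$, where the sum runs over the surviving leaves in the subtree rooted at $v$ and the $p_{\ell,i}$ are the from-root quantities of the preliminaries, kept fixed (never renormalized). The leaf rule is the base case. At an internal node $v$ querying $x_j$ with $j\neq i$, a pruning of its subtree retains exactly one child subtree, and $p_{\cdot,i}$ is unchanged in passing from $v$ to either child, so the optimum is $\min\{\beta_{v_L,i},\beta_{v_R,i}\}$; at an internal node querying $x_i$, both child subtrees survive and are pruned independently, so the optimum is $\beta_{v_L,i}+\beta_{v_R,i}$. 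Both cases match the recursion, and evaluating at the root, together with the previous paragraph, gives $\beta_{\mathsf{root},i}=\uncertainty_i(\cT)$.

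For the second line I would use the ``conservation'' of the reach-probabilities along the recursion: if $v$ queries $x_j$ with $j\neq i$ then $p_{v_L,i}(z)=p_{v_R,i}(z)=p_{v,i}(z)$, and if $v$ queries $x_i$ then $p_{v_L,i}(z)+p_{v_R,i}(z)=p_{v,i}(z)$ since $g_v(z)$ lands in exactly one child. Starting from equality at the leaves and applying $\min\{a_0,a_1\}\le a_z$ termwise, these identities push $\beta_{v,i}\le\min\{p_{v,i}(0),p_{v,i}(1)\}$ up the tree. (Equivalently: for any fixed pruning the surviving leaves below $v$ have $\sum_\ell p_{\ell,i}(z)\le p_{v,i}(z)$, while $\min\{p_{\ell,i}(0),p_{\ell,i}(1)\}\le p_{\ell,i}(z)$, so $\beta_{v,i}\le p_{v,i}(z)$ for each $z$.)

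The step that needs care --- and that I expect to be the main obstacle to a clean write-up --- is the first one: being precise that ``constant function'' is the same as ``choose a child'', that the leaf distributions of $\cT'$ factor into the stated product precisely because the randomness at distinct $x_i$-nodes is independent, and that it is the from-root $p_{\ell,i}$, rather than any subtree-local renormalization, that propagates correctly through the structural induction. Once these conventions are fixed, both parts of the proposition are short.
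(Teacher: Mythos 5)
Your proof is correct and takes essentially the route the paper intends: the paper states this proposition without a formal proof, but the interpretation it gives just beforehand --- $\beta_{v,i}$ as the minimum, over the modified trees $\cT'$ in the definition of $\uncertainty_i(\cT)$, of the overlap restricted to leaves below $v$ --- is exactly what your pruning formulation, the product form $D_z(\ell)=p_{\ell,i}(z)$, and the structural induction make rigorous. The conventions you flag as delicate (constant function $=$ choice of child, independence of the randomness at distinct $x_i$-nodes, and keeping the from-root $p_{\ell,i}$ unrenormalized) are precisely the ones under which the paper's claim of immediacy is justified.
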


\newcommand{\tell}{\tilde{\ell}}

\begin{proof}
Our goal is to simulate the computation of this tree using a
randomized sampling algorithm, where input $x_i$ is read independently
with probability $1-\beta_i(\cT)$. We want to ensure that for every
input the leaf reached in the end of this simulation has the same
distribution as in the original tree $\cT$. To specify how this
simulation is to be performed, we need to determine the following.
\begin{quote}
For each internal node of $\cT$, we need the transition probabilities
for moving to each child when the input for that node is available in
the sample, and when it is not. For this, we will specify for each
internal node $v$, a function $\tell_v: \{0,1\} \rightarrow [0,1]$ and
a value $\alpha_v \in [0,1]$, which are to be used as follows. When
the computation reaches node $v$, where $x_i$ is to be read and the
value of $x_i$ is available, then the next node is $v_L$ with
probability $\tell_v(x_i)$ and $v_R$ with probability
$1-\tell_v(x_i)$; if $x_i$ is not available then the next node is
$v_L$ with probability $\alpha_v$, and $v_R$ with probability
$1-\alpha_v$.
\end{quote}
\newcommand{\tp}{\tilde{p}}
\newcommand{\tbeta}{\tilde{\beta}}
\newcommand{\event}{\mathcal{E}}

Once $\tell_v$ and $\alpha_v$ have been specified, we may consider the
events $\chi_{v}$ and $\chi_{v,i}$ as before.  Note that the events in
$\{\chi_{v,i}: i=1,2,\ldots,k\}$ are independent and their
intersection is precisely $\chi_{v}$. Let $\tp_{v}(x)$ be the
probability of $\chi_v$ in the simulation for input $x$, and similarly
let $\tp_{v,i}(z)$ be the probability of the event $\chi_{v,i}$ when
$x_i=z$ (note that the probability of $\chi_{v,i}$ depends only on
$x_i$). Clearly, $\tp_v(x)=\prod_{i=1}^k \tp_{v,i}(x_i)$, and to show
that our simulation is faithful to the original computation, it will
suffice to verify that $p_{v,i}=\tp_{v,i}$ for all $i$. The rest of
the proof consists of two steps.
\begin{description}
\item[Step 1:] Using the values $\beta_{v,i}$ defined above,
define $\tell_v$ and $\alpha_v$.

\item[Step 2:] Show that for $i\in \{1,2,\ldots,k\}$ and each node
$v \in \cT$ and $z \in \{0,1\}$, we have $\tp_{v,i}(z)=p_{v,i}(z)$.
\end{description}
We now implement this two-step plan.  Consider the first step.  Recall
the values of $\beta_{v,i}$ defined above using a bottom up
computation on the tree $\cT$.  We can now define $\alpha_v$ right
away based on the $\beta_{v,i}$'s computed above. If $v$ has label
$i$, then \[ \alpha_v= \frac{\beta_{v_L,i}}{\beta_{v,i}}.\]
Now, consider an internal node $v$.
Let $v_1,v_2,\ldots, v_r=v$ be a path in the tree from the root $v_1$
to the node $v$.  Let $\tbeta_{v,i}$ denote the probability that the
computation reaches node $v$ and $x_i$ is not sampled, assuming that
the choices at nodes not labelled $i$ do not cause the computation to
leave the path.  Thus,
\[ \tbeta_{v,i} = \Pr[\chi_{v,i} \wedge \neg\event_i]=
\beta_{\mathsf{root},i}\prod_{j=1}^{r-1} \gamma_j,\]
where  $\event_i$ is the event ``$x_i$ is sampled,'' and
\[ \gamma_j = \left\{ \begin{array}{l l}
              1 & \mbox{if $x_i$ is not queried at $v_j$}\\
           \alpha_{v_j} & \mbox{if $v_{j+1}$ is the left child of $v_j$}\\
           1-\alpha_{v_{j+1}} &
               \mbox{if $v_{j+1}$ is the right child of $v_j$}
          \end{array}
         \right. .
\]
We will show that the following choice for $\tell_v$ ensures that 
the probability of reaching every node is preserved in our simulation:
\[ \tell_v(z)= \frac{p_{v_L,i}(z) - \tbeta_{v_L,i}}
                    {p_{v,i}(z) - \tbeta_{v,i}}.\]
This completes Step 1.

Now, we move to Step 2 and verify that these definitions ensure that
$p_{v,i}(z)=\tp_{v,i}(z)$. Clearly, the claim is true for the root,
for both quantities are $1$.  Suppose the claim is true for a node
$v$. We will now show that it is true of $v_L$ and $v_R$ as well.   Consider, $v_L$. We
have
\[ p_{v_L,i}(z) = \Pr[\chi_{v_L,i} \wedge \event_i] +
                  \Pr[\chi_{v_L,i} \wedge \neg\event_i],\]
where the probabilities are computed assuming that $x_i=z$.
Using our assumption that the claim holds for $v$, we can
compute the first term as 
\[
\Pr[\chi_{v,i} \wedge \event_i] \cdot \tell_{v}(z) = 
(p_{v,i}(z)-\tbeta_{v,i}) \cdot \frac{p_{v_L,i}(z)-\tbeta_{v_L,i}}
                                    {p_{v,i}(z)-\tbeta_{v,i}}
                           = p_{v_L,i}(z)-\tbeta_{v_L,i}.\]
By definition, the second term is precisely $\tbeta_{v_L,i}$.
It follows that $p_{v_L,i}(z)=\tp_{v_L,i}(z)$, and the claim holds for 
$v_L$. A similar calculation shows that the claim holds for $v_R$ as well.
This completes Step 2.

Thus, the simulation induces the same distribution on the leaves of
$\cT$ as the original computation, and therefore computes $f$ with the
same probability of error.
\end{proof}

The following proposition states that a sampling algorithm that leaves many variables
unread cannot compute majority reliably.

\begin{proposition}
\label{prop:majority-by-sampling}
Suppose $N=2k+1$ is odd. Let $\cA$ be a sampling algorithm with inputs from
$\{0,1\}^N$, which leaves each variable unread with probability 
$N^{-\frac{1}{5}}$. Let $X$ be uniformly distributed on all strings 
with $k$ or $k+1$ ones. Then, 
$\Pr[\cA(X) = \majority(X)] \leq \frac{1}{2} + o(1)$.
\end{proposition}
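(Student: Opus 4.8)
The plan is to show that the sample $y$ obtained from $X$ carries essentially no information about $\majority(X)$, so that no predictor can beat $\tfrac12$ by more than $o(1)$. Since $\cA$'s guess is a (possibly randomized) function of $y$ alone, by Yao's principle it suffices to bound the statistical distance between the distribution of $y$ conditioned on $\majority(X)=0$ and conditioned on $\majority(X)=1$; the success probability is then at most $\tfrac12 + \tfrac12\|y\mid \majority=0,\ y\mid \majority=1\|_{TV}$, and I want to show this total-variation distance is $o(1)$.

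First I would set up the sampling more concretely. Let $S\subseteq[N]$ be the (random) set of \emph{read} coordinates, where each $i\in S$ independently with probability $1-N^{-1/5}$; then $y$ reveals $x_i$ for $i\in S$ and $\star$ elsewhere. Condition on $S$: the number of \emph{unread} coordinates $M=N-|S|$ is $\mathrm{Bin}(N,N^{-1/5})$, which is concentrated around $N^{4/5}$ and is at least, say, $\tfrac12 N^{4/5}$ with probability $1-o(1)$. So I would argue conditionally on a fixed $S$ with $M \ge \tfrac12 N^{4/5}$ unread coordinates, and take the $o(1)$ failure probability of this event as part of the final bound.

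The heart of the matter: fix $S$ and fix the revealed values $x_S=a\in\{0,1\}^S$, with $w=\sum_{i\in S}a_i$ the number of revealed ones. Then $\majority(X)$ is determined by how many of the $M$ unread coordinates are $1$: writing $U$ for that count, $X$ has $k$ ones iff $w+U=k$ and $k+1$ ones iff $w+U=k+1$. Since $X$ is uniform over strings with $k$ or $k+1$ ones, conditioned on $(S,a)$ the pair $(U,\text{parity bit})$ is distributed as: pick the total number of ones $T$ uniform in $\{k,k+1\}$, then place $T-w$ ones uniformly among the $M$ unread slots. So the \emph{posterior} on $\majority(X)$ given $y=(S,a)$ is governed by the relative likelihood of $U=k-w$ versus $U=k+1-w$, i.e.\ by the ratio $\binom{M}{k-w}/\binom{M}{k+1-w} = (k+1-w)/(M-k+w)$ of hypergeometric-type weights. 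The key estimate is that, with probability $1-o(1)$ over the draw of $y$, this ratio is $1\pm o(1)$, equivalently $|M-2(k-w)| = o(M)$, i.e.\ the ``gap to the midpoint'' among the unread coordinates is $o(M)=o(N^{4/5})$. But $k-w$ is essentially the number of unread ones one would expect: unconditionally $X$ is nearly uniform on the slice (it's an even mixture of the two central slices), so the number of ones among any fixed set of $M$ coordinates has standard deviation $\Theta(\sqrt{M})$ around $M/2$; hence $|M-2(k-w)| = O(\sqrt{M}\cdot\text{polylog})= o(M)$ with probability $1-o(1)$. When the likelihood ratio is $1\pm o(1)$, the posterior on the parity bit is $\tfrac12\pm o(1)$, so \emph{any} guesser errs with probability $\tfrac12 - o(1)$ on that $y$; averaging over $y$ and over $S$ gives the claim.

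The step I expect to be the main obstacle is making the concentration argument for $|M - 2(k-w)| = o(M)$ fully rigorous, because $a = x_S$ is \emph{not} drawn from a product distribution — $X$ lives on two slices of the cube, so the revealed bits are mildly negatively correlated. I would handle this either by (i) a direct second-moment computation: compute $\E$ and $\mathrm{Var}$ of $w=\sum_{i\in S}x_i$ under the slice mixture (variance is $O(M)$ by the standard hypergeometric variance, up to the harmless $O(1)$ from mixing two adjacent slices) and apply Chebyshev; or (ii) the cleaner route of comparing the slice distribution to the uniform distribution on $\{0,1\}^N$ — under uniform $X$, $w$ is a genuine binomial and concentration is immediate, and the ratio of the slice measure to $2^{-N}$ on any single string is $2^N/(\binom{N}{k}+\binom{N}{k+1}) = \Theta(\sqrt N)$, so a bad event of uniform-probability $p$ has slice-probability $O(\sqrt N\, p)$; choosing the deviation threshold at $\sqrt{M}\log N$ kills the $\sqrt N$ factor. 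Route (ii) also automatically controls the probability that the \emph{revealed} part already forces the majority (i.e.\ $|w - k|$ too large in the other direction). Everything else — Yao's principle, conditioning on $S$, the hypergeometric likelihood-ratio identity — is routine.
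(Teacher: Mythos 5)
Your proposal is correct and arrives at the same conclusion as the paper, but by a genuinely different technical route. The paper's (sketched) proof also reduces to the sufficient statistic --- the number of ones among the read coordinates --- and compares its two conditional distributions $D_0$ (given $X$ has $k$ ones) and $D_1$ (given $k+1$ ones); but it bounds their $\ell_1$ distance by asserting that the relative entropy $S(D_0\|D_1)$ is $o(1)$ and invoking Pinsker's inequality. You instead argue pointwise: conditioning on the observed $y$, you write the posterior odds for the majority bit as the exact hypergeometric likelihood ratio $(k+1-w)/(M-k+w)$ (using ${N \choose k}={N \choose k+1}$ so the prior is even), and show by concentration of $w$ that this ratio is $1\pm o(1)$ for all but an $o(1)$ fraction of observations, so no guesser beats $\frac{1}{2}+o(1)$. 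Your route is more elementary (no divergence computation, no Pinsker) and has the virtue of explicitly confronting the one delicate point that the paper's ``direct computation'' leaves implicit --- that $X$ lives on two slices rather than a product measure; either of your fixes (hypergeometric variance plus Chebyshev, or comparison to the uniform measure at a cost of a $\Theta(\sqrt{N})$ density factor) works, since the deviation threshold $\sqrt{M}\log N$ is well below $M=\Theta(N^{4/5})$. The paper's entropy route is more compact but is only a sketch as written. One minor quibble: the reduction from an arbitrary randomized guesser to total variation distance is the optimal Bayes test (Le Cam's two-point bound), not Yao's principle, though nothing in your argument depends on the name.
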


\begin{proof} (Sketch.)
By the Chernoff bound, with probability $1-o(1)$, the size of the
sample picked by $\cA$ is $N-\theta(N^{4/5})$. We may assume that
$\cA$ bases its decision only on the number of $1$'s in the sample.
Let $D_0$ be the distribution of the number of $1$'s in the 
sample conditioned on the number of $1$'s in $X$ being $k$ and 
let $D_1$ be the corresponding distribution conditioned on the number
of $1$'s in $X$ being $k+1$. A direct computation shows that
the relative entropy
\[ S(D_0 \| D_1) = \sum_{i} D_0(i) \log \frac{D_0(i)}{D_1(i)}= o(1).\]
It follows that $\ell_1$ distance between $D_0$ and $D_1$ is $o(1)$.
Our claim follows from this.
\end{proof}

\begin{proof}[Proof of main theorem]
Proposition~\ref{prop:protocol-to-tree} guarantees that if we have a protocol for computing majority with constant error
that uses less than $\frac{1}{C} N \log \log N$ transmissions for some constant $C$, then we have a randomized 
decision tree to compute majority of $\sqrt{N}$ bits with constant error and  $\beta(\cT) \geq N^{-\frac{1}{10}}$. 
Theorem~\ref{thm:randomized-to-sampling} then guarantees we have a sampling algorithm $\cA$ that samples every variable
with probability $1 - N^{-\frac{1}{10}}$, and yet manages to compute the majority of $\sqrt{N}$ bits with constant error.
But this is impossible by Proposition~\ref{prop:majority-by-sampling}.
\end{proof}
\section{Conclusions}
\label{sec:conclusions}

In this paper, we presented a technique of converting computation on randomized decision tree model to 
computation on a model of sampling algorithms. We related the uncertainty of an input variable in the randomized decision tree
model to the probability that the variable is left unread by the sampling algorithm. 

We showed the power of this technique for proving lower bounds by providing elementary arguments 
to prove all the lower bounds on average noisy decision tree complexity for computing various functions
presented by Evans and Pippenger~\cite{Evans99}.

Using our technique, we then presented lower bounds for wireless
communication networks where there is a restriction on transmission power. Any bit sent by a transmitter
is received (with channel noise) only by receivers which are within the transmission radius of the transmitter.
We showed that to compute the parity and majority function of $N$ input bits with constant probability of error, we
need $\Omega(N \log \log N)$ transmissions. This result simplifies and extends the same earlier lower bound for parity~\cite{Dutta08, Dutta08-journal} 
and nicely complements the upper bound result of Ying, Srikant and Dullerud~\cite{Ying06}, which showed that $O(N \log \log N)$ transmissions are sufficient for computing the sum of all the $N$ bits. Our result also implies that the sum of $N$ bits cannot be approximated 
up to a constant additive error by any constant error protocol for $\cN(N,R)$ using $o(N \log \log N)$ transmissions, 
if $R \leq N^{-\beta}$ for some $\beta > 0$.

\bibliographystyle{alpha}
\bibliography{sampling_full}

\end{document}